\documentclass[10pt]{article}
\usepackage[T1]{fontenc}
\usepackage[utf8]{inputenc}
\usepackage{lmodern}
\usepackage{microtype}
\usepackage{fullpage}
\usepackage{authblk}
\usepackage{amsmath,amssymb,amsfonts,amsthm,mathtools}
\usepackage{aliascnt}
\usepackage{mathrsfs}
\usepackage{bm}
\usepackage{dsfont}
\usepackage{mathpazo}
\usepackage{booktabs}
\usepackage{tabularx}
\usepackage{longtable}
\usepackage{array}
\usepackage{xcolor}
\usepackage[shortlabels]{enumitem}
\usepackage{mdframed}
\usepackage{float}
\usepackage{url}
\usepackage{qcircuit}
\usepackage{optidef}
\usepackage[numbers,sort&compress]{natbib}
\usepackage{hyperref}
\usepackage[nameinlink,noabbrev]{cleveref}

\definecolor{MainBlue}{HTML}{173B57}
\definecolor{AccentTeal}{HTML}{247B7B}
\definecolor{SoftBlue}{HTML}{F1F6FA}
\definecolor{SoftTeal}{HTML}{EFF8F7}
\definecolor{SoftGray}{HTML}{F5F6F7}
\definecolor{RuleGray}{HTML}{D8DEE4}
\definecolor{WarmGold}{HTML}{A66F18}

\hypersetup{
colorlinks=true,
linkcolor=blue!55!black,
citecolor=blue!55!black,
urlcolor=blue!55!black,
pdfborder={0 0 0}
}
\allowdisplaybreaks
\theoremstyle{plain}
\newtheorem{theorem}{Theorem}[section]
\newaliascnt{proposition}{theorem}

\aliascntresetthe{proposition}
\newaliascnt{lemma}{theorem}
\newtheorem{lemma}[lemma]{Lemma}
\aliascntresetthe{lemma}
\newaliascnt{corollary}{theorem}
\newtheorem{corollary}[corollary]{Corollary}
\aliascntresetthe{corollary}
\theoremstyle{definition}
\newaliascnt{definition}{theorem}

\aliascntresetthe{definition}
\newaliascnt{example}{theorem}

\aliascntresetthe{example}
\newaliascnt{problem}{theorem}

\aliascntresetthe{problem}
\theoremstyle{remark}
\newaliascnt{remark}{theorem}

\aliascntresetthe{remark}

\mdfdefinestyle{theoremframe}{
backgroundcolor=SoftBlue,
linecolor=MainBlue,
linewidth=.65pt,
innerleftmargin=7pt,innerrightmargin=7pt,
innertopmargin=6pt,innerbottommargin=6pt,
needspace=6\baselineskip,
skipabove=9pt,skipbelow=9pt
}
\mdfdefinestyle{propositionframe}{
backgroundcolor=SoftTeal,
linecolor=AccentTeal,
linewidth=.6pt,
innerleftmargin=7pt,innerrightmargin=7pt,
innertopmargin=6pt,innerbottommargin=6pt,
needspace=6\baselineskip,
nobreak=true,
skipabove=8pt,skipbelow=8pt
}
\mdfdefinestyle{lemmaframe}{
backgroundcolor=SoftGray,
linecolor=RuleGray!60!black,
linewidth=.5pt,
innerleftmargin=7pt,innerrightmargin=7pt,
innertopmargin=5pt,innerbottommargin=5pt,
needspace=5\baselineskip,
nobreak=true,
skipabove=8pt,skipbelow=8pt
}
\surroundwithmdframed[style=theoremframe]{theorem}
\surroundwithmdframed[style=propositionframe]{proposition}
\surroundwithmdframed[style=propositionframe]{corollary}
\surroundwithmdframed[style=lemmaframe]{lemma}

\crefname{theorem}{theorem}{theorems}
\Crefname{theorem}{Theorem}{Theorems}
\crefname{proposition}{proposition}{propositions}
\Crefname{proposition}{Proposition}{Propositions}
\crefname{lemma}{lemma}{lemmas}
\Crefname{lemma}{Lemma}{Lemmas}
\crefname{corollary}{corollary}{corollaries}
\Crefname{corollary}{Corollary}{Corollaries}
\crefname{definition}{definition}{definitions}
\Crefname{definition}{Definition}{Definitions}
\crefname{example}{example}{examples}
\Crefname{example}{Example}{Examples}
\crefname{problem}{problem}{problems}
\Crefname{problem}{Problem}{Problems}
\crefname{remark}{remark}{remarks}
\Crefname{remark}{Remark}{Remarks}
\crefname{equation}{equation}{equations}
\Crefname{equation}{Equation}{Equations}

\newcommand{\ket}[1]{|#1\rangle}
\newcommand{\bra}[1]{\langle#1|}

\newcommand{\ketbra}[2]{|#1\rangle\!\langle#2|}

\newcommand{\tr}{\operatorname{Tr}}

\newcommand{\idop}{\mathds{1}}

\newcommand{\supp}{\operatorname{supp}}

\newcommand{\CC}{\mathbb{C}}

\newcommand{\cH}{\mathcal{H}}

\newcommand{\cL}{\mathcal{L}}

\newcommand{\ox}{\otimes}

\newcommand{\eqt}[1]{\stackrel{\mathclap{\mbox{\scriptsize #1}}}{=}}
\newcommand{\leqt}[1]{\stackrel{\mathclap{\mbox{\scriptsize #1}}}{\leq}}

\newcommand{\lt}[1]{\stackrel{\mathclap{\mbox{\scriptsize #1}}}{<}}

\DeclareMathOperator{\im}{Im}

\numberwithin{equation}{section}

\begin{document}

\title{The Min-Rains Relative Entropy Is Not Tight for Exact PPT Entanglement Distillation}
\author[1,*]{Chengkai Zhu}
\author[2,$\dagger$]{Xin Wang}
\affil[1]{QudeLeap Research, Shanghai 200030, China}
\affil[2]{Thrust of Artificial Intelligence, Information Hub,\par
The Hong Kong University of Science and Technology (Guangzhou), Guangzhou 511453, China}
\date{\today}

\begingroup
\renewcommand{\thefootnote}{\fnsymbol{footnote}}

\footnotetext[1]{\href{mailto:zhuchengkai7@gmail.com}
{zhuchengkai7@gmail.com}}

\footnotetext[2]{\href{mailto:felixxinwang@hkust-gz.edu.cn}
{felixxinwang@hkust-gz.edu.cn}}

\endgroup
\maketitle

\begin{abstract}
Exact entanglement distillation converts a noisy bipartite state into a maximally entangled state with zero error. Under completely PPT-preserving operations, the additive min-Rains relative entropy provides a single-letter upper bound on the regularized distillation rate. An interesting problem in entanglement theory dating back to 2016 has been whether this bound is always tight. Here we resolve this question in the negative. The key is a tensor-stable rigidity absent from the min-Rains relaxation: every feasible exact-distillation effect must act as the identity on the support of the input state. We convert this constraint into a new single-letter upper bound using a generally non-Hermitian, range-supported witness. For a rank-three subspace, we show this upper bound lies strictly below the min-Rains relative entropy for every state with that support. Thus, the constraint discarded by the min-Rains relaxation remains relevant under arbitrary tensor powers. Our result rules out the min-Rains relative entropy as a closed-form formula for exact PPT distillable entanglement and reveals the subtle asymptotic structure of exact entanglement manipulation under PPT operations.
\end{abstract}


\section{Introduction}
\label{sec:introduction}

Entanglement distillation converts noisy bipartite states into maximally
entangled pairs. Such pairs enable quantum
teleportation~\cite{Bennett1993Teleportation}, entanglement-based
cryptography~\cite{Ekert1991}, quantum repeaters~\cite{Briegel1998}, and
distributed quantum computation~\cite{Cirac1999Distributed}, making
distillation a central task in quantum information theory~\cite{Khatri2025a,Watrous2018,Hayashi2017b}.
In its usual asymptotic formulation, the conversion error may vanish with increasing blocklength~\cite{Bennett1996Purification,Bennett1996Mixed,Horodecki2009,Fang2019}.
Exact distillation imposes a qualitatively different requirement where the output must be exactly maximally entangled at every finite blocklength~\cite{Duan2005,Matthews2008}. Under this requirement, the nonzero eigenvalues of the input state are irrelevant, and the task is governed by the geometry of its support~\cite{Wang2016,Regula2019OneShot}.

Completely PPT-preserving operations provide a tractable relaxation of LOCC~\cite{Rains1999,Rains2001,ChitambarGour2019}.  For a state with support projection $P$, the one-shot exact PPT distillation is characterized by~\cite{Wang2016}
\begin{equation}
W_0(P)=\min\{\|E^\Gamma\|_\infty:P\le E\le\idop\}.
\end{equation}
Dropping the upper constraint gives the multiplicative program
\begin{equation}
M(P)=\min\{\|R^\Gamma\|_\infty:P\le R\}
\end{equation}
and the min-Rains relative entropy $R_{\min}(\rho)=-\log M(P)$~\cite{Wang2016c,XWthesis,Baeuml2019}.  Consequently,
$R_{\min}$ is an additive single-letter upper bound on the regularized exact PPT distillable entanglement.  The 2017 work that introduced this bound left open whether it is tight for every state~\cite{Wang2016c}, i.e., whether
\begin{equation}
    E_{0,\mathrm{PPT}}^\infty(\rho)\stackrel{?}{=}R_{\min}(\rho).
\end{equation}
The asymptotic tightness question was posed explicitly again in~\cite[Section~3.4]{Baeuml2019}.  Equivalently, one asks whether the discarded constraint $E\le\idop$ becomes asymptotically negligible under tensor powers.

We prove that it need not.  The inequalities $P\le E\le\idop$ force every
feasible exact-distillation effect to act as the identity on $\im(P)$ and to
have no support--complement coherences.  A normalized, generally
non-Hermitian operator $T$ satisfying $PT=T$ detects this fixed action and
tensorizes into a regularized bound.  We construct an explicit rank-three
qutrit--qutrit support and exact certificates $T$ and $R\ge P$ satisfying
$\|T^\Gamma\|_1\|R^\Gamma\|_\infty<1$.  Hence every state with this support
obeys
\begin{align}
E_{0,\mathrm{PPT}}^\infty(\rho)<R_{\min}(\rho).
\label{eq:intro-strict-separation}
\end{align}

\section{Exact Entanglement Distillation Under PPT Operations}
\label{sec:preliminaries}
All Hilbert spaces are finite-dimensional, and all logarithms are to base two.  We write $\cL(\cH)$ for the space of linear operators on $\cH$.  For $X\in\cL(\cH)$, the trace norm is $\|X\|_1=\tr\sqrt{X^\dagger X}$, and $\|X\|_\infty$ is its largest singular value.  A quantum state is an operator $\rho\ge0$ with $\tr\rho=1$.  For a bipartite system $\cH_{AB}=\cH_A\ox\cH_B$, partial transpose on $B$ is denoted by $\Gamma$, i.e., $(\ketbra{i}{k}\ox\ketbra{j}{\ell})^\Gamma=\ketbra{i}{k}\ox\ketbra{\ell}{j}$.  For a state $\rho$, let $P=\Pi_{\supp(\rho)}$ be the orthogonal projection onto its support.  For an integer $M\ge1$, let $\cH_{A'}\simeq\cH_{B'}\simeq\CC^M$ and define $\ket{\Phi_M}\coloneqq 1/\sqrt{M}\sum_{j=1}^{M}\ket{j}_{A'}\ket{j}_{B'},~\Phi_M\coloneqq\ket{\Phi_M}\!\bra{\Phi_M}$.
A quantum channel is a completely positive, trace-preserving (CPTP) linear map.  It is a \emph{PPT operation} if it is completely PPT-preserving \cite{Rains1999r,Rains1999}.  Equivalently, a channel
$\mathcal N_{AB\to A'B'}$ is PPT when $\Gamma_{B'}\circ\mathcal N\circ\Gamma_B$ is completely positive.

The one-shot exact PPT distillable entanglement is defined by~\cite{Wang2016,Fang2019}
\begin{align}
\widehat E_{0,\mathrm{PPT}}^{(1)}(\rho)
\coloneqq
\max\Bigl\{\log M:\;&M\in\mathbb N_{\ge1},\ 
\exists\,\mathcal N_{AB\to A'B'} \text{ PPT operation, such that }\mathcal N(\rho)=\Phi_M\Bigr\}.
\label{eq:operational-one-shot-rate}
\end{align}
Wang and Duan showed that this quantity depends only on the support
projection $P$ and admits an SDP characterization~\cite{Wang2016}.  Define
\begin{align}
W_0(P)
\coloneqq
\min\bigl\{\|E^\Gamma\|_\infty:P\le E\le\idop\bigr\}.
\label{eq:W0-program}
\end{align}
Exact distillation to $\Phi_M$ is possible precisely when
$M\le W_0(P)^{-1}$.  Restricting the output dimension to integers therefore
gives $\widehat E_{0,\mathrm{PPT}}^{(1)}(\rho)
=\log\left\lfloor W_0(P)^{-1}\right\rfloor$.  For tensor-product arguments,
it is convenient to remove this integer rounding and introduce the
continuous one-shot quantity
\begin{align}
E_{0,\mathrm{PPT}}^{(1)}(\rho)
\coloneqq-\log W_0(P).
\label{eq:one-copy-rate}
\end{align}
The feasible choice $E=\idop$ gives $W_0(P)\le1$.  Hence
$x\coloneqq W_0(P)^{-1}\ge1$.  Since
$x/2\le\lfloor x\rfloor\le x$, it follows that $E_{0,\mathrm{PPT}}^{(1)}(\rho)-1\le\widehat E_{0,\mathrm{PPT}}^{(1)}(\rho)\le E_{0,\mathrm{PPT}}^{(1)}(\rho)$.
The regularized exact PPT distillable entanglement is then given by
\begin{align}
E_{0,\mathrm{PPT}}^\infty(\rho)\coloneqq
\sup_{n\ge1}\frac1n
\widehat E_{0,\mathrm{PPT}}^{(1)}(\rho^{\ox n})=\lim_{n\to\infty}\frac1n
E_{0,\mathrm{PPT}}^{(1)}(\rho^{\ox n}),
\label{eq:regularized-zero-error-rate}
\end{align}
where the equality follows from Lemma~\ref{lem:regularized-rate-exists}.  We abbreviate $W_n(P)\coloneqq W_0(P^{\ox n})$.

The single-letter bound used below is best understood as part of the Rains family.  Rains introduced a PPT-relaxed relative-entropy bound on distillable entanglement~\cite{Rains1999,Rains2001}. Denote the Rains set by
\begin{align}
\mathsf{PPT}'(A{:}B)
\coloneqq
\bigl\{\sigma_{AB}\ge0:\|\sigma_{AB}^\Gamma\|_1\le1\bigr\}.
\label{eq:Rains-set}
\end{align}
Minimizing the Umegaki relative entropy over this set gives the original
Rains relative entropy.  Pairing the same set with the support, or order-zero, divergence gives
\begin{align}
D_0(\rho\|\sigma) \coloneqq-\log\tr(P\sigma),\quad R_{\min}(\rho)\coloneqq\min_{\sigma\in\mathsf{PPT}'(A{:}B)}
D_0(\rho\|\sigma).
\label{eq:min-Rains-relative-entropy}
\end{align}
Wang and Duan introduced this state quantity as the additive SDP bound
$E_M$~\cite{Wang2016c}.  B\"auml \emph{et al.} subsequently used the name
\emph{min-Rains relative entropy} and reserved \emph{min-Rains information}
for the corresponding channel optimization~\cite{Baeuml2019}. Using the definition of $D_0$, the min-Rains relative entropy becomes
\begin{align}
R_{\min}(\rho)
=-\log M(P),
\label{eq:min-rains-definition}
\end{align}
where
\begin{align}
M(P)
&=\max\bigl\{\tr(PV):V\ge0,\ \|V^\Gamma\|_1\le1\bigr\},
\label{eq:min-rains-primal}\\
&\eqt{(i)}\min\bigl\{\|R^\Gamma\|_\infty:R\ge P\bigr\},
\label{eq:min-rains-dual}
\end{align}
where (i) is the strong duality established in Ref.~\cite{Wang2016c} follows from the standard Slater condition~\cite{VandenbergheBoyd1996}. The relation to exact distillation is immediate: every feasible point of Eq.~\eqref{eq:W0-program} is also feasible for Eq.~\eqref{eq:min-rains-dual}.  A second essential property is multiplicativity~\cite{Wang2016c}.  Explicitly, we have $M(P)\le W_0(P)$ and $M(P\ox Q)=M(P)M(Q)$. Combining these two properties gives, for every $n\ge1$,
\begin{align}
-\frac1n\log W_0(P^{\ox n})\le-\frac1n\log M(P^{\ox n})=-\log M(P)
=R_{\min}(\rho).
\label{eq:general-min-rains-bound-block}
\end{align}
Taking the limit yields
\begin{align}
E_{0,\mathrm{PPT}}^\infty(\rho)\le R_{\min}(\rho).
\label{eq:general-min-rains-bound}
\end{align}
The asymptotic tightness question is whether this bound is \textit{always} an equality, i.e., $E_{0,\mathrm{PPT}}^\infty(\rho)\stackrel{?}{=}R_{\min}(\rho)$. It is nontrivial to answer this question since we need to consider the regularization $\lim_{n\to \infty}-1/n\log W_0(P^{\ox n})$ and many of the common candidate states actually give an equality; see the example of antisymmetric states in Appendix~\ref{app:antisym}.

\subsection{Range-supported tensor witness}
Our first main result is to place an upper bound on $E^{(1)}_{0,\mathrm{PPT}}(\rho^{\ox n})$ for every blocklength $n$, based on the fact that the upper constraint $E\le\idop$ fixes every feasible effect on the support, as the following lemma shows.

\begin{lemma}\label{lem:support-rigidity}
Let $P$ be an orthogonal projection and let $P\le E\le\idop$.
Relative to $\cH=\im(P)\oplus\ker(P)$,
\begin{align}
    E=\idop_{\im(P)}\oplus E_\perp,
    \qquad
    0\le E_\perp\le\idop_{\ker(P)}.
    \label{eq:effect-block-form}
\end{align}
Equivalently, $EP=PE=P$.
\end{lemma}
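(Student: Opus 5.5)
The plan is to work directly with the two operator inequalities and extract the block structure from positivity alone. Write $Q\coloneqq\idop-P$ for the projection onto $\ker(P)$. Relative to $\cH=\im(P)\oplus\ker(P)$, decompose
\begin{align*}
E=\begin{pmatrix} E_{11} & E_{12}\\ E_{12}^\dagger & E_\perp\end{pmatrix},
\end{align*}
with $E_{11}=PEP$, $E_{12}=PEQ$ and $E_\perp=QEQ$. Hermiticity of $E$ follows from $E\ge P\ge0$.

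First I would compress both inequalities to $\im(P)$. From $P\le E$ we get $P\le PEP=E_{11}$ on $\im(P)$. From $E\le\idop$ we get $PEP\le P$. Together these give $E_{11}=\idop_{\im(P)}$. Compressing the same inequalities to $\ker(P)$ gives $0=QPQ\le QEQ\le Q$, i.e.\ $0\le E_\perp\le\idop_{\ker(P)}$.

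The main step is to kill the off-diagonal block $E_{12}$. Consider $\idop-E\ge0$. In block form it reads
\begin{align*}
\idop-E=\begin{pmatrix} 0 & -E_{12}\\ -E_{12}^\dagger & \idop_{\ker(P)}-E_\perp\end{pmatrix}\ge0 .
\end{align*}
A positive semidefinite operator with a zero diagonal block must have zero off-diagonal blocks touching it. This is the standard fact that $\ker$ of a diagonal block propagates. Concretely, I would test the quadratic form on the vector $u\oplus t v$ with $u\in\im(P)$, $v\in\ker(P)$ and $t\in\mathbb R$. This gives
\begin{align*}
-2t\,\mathrm{Re}\langle u,E_{12}v\rangle+t^2\langle v,(\idop-E_\perp)v\rangle\ge0
\end{align*}
for all $t$. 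Letting $t\to0^\pm$ forces $\mathrm{Re}\langle u,E_{12}v\rangle=0$. Replacing $u$ by $iu$ handles the imaginary part, so $E_{12}=0$. Equivalently, one can cite that $X\ge0$ and $\langle w,Xw\rangle=0$ imply $Xw=0$, applied to $X=\idop-E$ and $w\in\im(P)$. This yields $(\idop-E)P=0$ directly.

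This last formulation is probably the cleanest route, and I would use it as the main argument. For $w\in\im(P)$ we have $\langle w,(\idop-E)w\rangle\le\langle w,(\idop-P)w\rangle=0$, and the left side is also nonnegative, so it vanishes. Writing $\idop-E=C^\dagger C$ then gives $Cw=0$, hence $(\idop-E)w=0$. This proves $EP=P$, and taking adjoints gives $PE=P$. The block form then follows: $PEQ=PQ=0$, $PEP=P$, and the bounds on $E_\perp$ come from the compression step above.

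For the converse direction of the ``equivalently'' clause, note that $EP=PE=P$ means $E$ commutes with $P$ and acts as the identity on $\im(P)$. This is exactly the block form, given the separately established bounds on $E_\perp$. I do not expect any real obstacle; the only nontrivial point is the positive-semidefinite kernel argument, which handles the off-diagonal coherences.
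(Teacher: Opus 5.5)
Your proof is correct and takes essentially the same route as the paper. For $w\in\im(P)$ you sandwich $\langle w,(\idop-E)w\rangle$ between $0$ and $\langle w,(\idop-P)w\rangle=0$, use a square-root factorization to get $(\idop-E)w=0$ (hence $EP=P$, then $PE=P$ by adjoints), and obtain the bounds on $E_\perp$ by compressing to $\ker(P)$. Your separate compression step for $E_{11}$ and the alternative $t\to0^\pm$ argument for $E_{12}$ are valid but redundant once $EP=PE=P$ is established.
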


\begin{proof}
For $\ket v\in\im(P)$, the inequalities $P\le E\le\idop$ imply
\begin{align}
    \|v\|^2
    =\langle v|P|v\rangle
    \le\langle v|E|v\rangle
    \le\|v\|^2.
\end{align}
Therefore
$\langle v|(\idop-E)|v\rangle
=\|(\idop-E)^{1/2}\ket v\|^2=0$, and hence
$E\ket v=\ket v$.  Thus $EP=P$.  Taking adjoints gives $PE=P$
because $E$ and $P$ are Hermitian.  These identities fix the first
block column and row of $E$ relative to
$\im(P)\oplus\ker(P)$ and force both off-diagonal blocks to vanish.
Restricting $E\ge0$ and $\idop-E\ge0$ to $\ker(P)$ gives
$0\le E_\perp\le\idop_{\ker(P)}$.
\end{proof}

We now convert this fixed-action property into a bound valid at every blocklength. We call $T$ a range-supported tensor witness because $PT=T$ and it yields a lower bound on $\|E^{\Gamma}\|_\infty$ in Eq.~\eqref{eq:W0-program} that tensorizes under arbitrary tensor powers.

\begin{theorem}[Range-supported tensor witness]
\label{thm:range-supported-witness}
Let $P$ be a nonzero bipartite projection, and let $T$ be an arbitrary
linear operator satisfying
\begin{align}
    PT=T,\qquad \tr T=1.
    \label{eq:range-witness-hypotheses}
\end{align}
Then, for every $n\ge1$,
\begin{align}
    W_0(P^{\ox n})\ge\|T^\Gamma\|_1^{-n}.
    \label{eq:range-witness-block-bound}
\end{align}
Consequently, every state $\rho$ with support projection $P$ satisfies
\begin{align}
    E_{0,\mathrm{PPT}}^\infty(\rho)
    \le\log\|T^\Gamma\|_1.
    \label{eq:rate-range-witness}
\end{align}
\end{theorem}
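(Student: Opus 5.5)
The plan is to lower-bound $\|E^\Gamma\|_\infty$ for every feasible $E$ in the program \eqref{eq:W0-program} for $P^{\ox n}$ by pairing it against the witness $T^{\ox n}$, using trace--norm/operator--norm duality. The relevant projection is $P^{\ox n}$, and $T_n\coloneqq T^{\ox n}$ satisfies $P^{\ox n}T_n=(PT)^{\ox n}=T_n$ and $\tr T_n=(\tr T)^n=1$. So it suffices to prove the single-shot statement $W_0(P)\ge\|T^\Gamma\|_1^{-1}$ for an arbitrary projection $P$ and any $T$ with $PT=T$, $\tr T=1$. We then apply it to $(P^{\ox n},T^{\ox n})$ and use multiplicativity of the trace norm under tensor products, $\|(T^{\ox n})^\Gamma\|_1=\|(T^\Gamma)^{\ox n}\|_1=\|T^\Gamma\|_1^n$. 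Here we use that partial transpose on $B_1\cdots B_n$ factorizes over the copies.

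For the single-shot claim, fix $E$ with $P\le E\le\idop$. By Lemma~\ref{lem:support-rigidity}, $EP=P$, so $ET=EPT=PT=T$ and hence $\tr(ET)=\tr T=1$. Partial transpose preserves the Hilbert--Schmidt pairing in the form $\tr(XY)=\tr(X^\Gamma Y^\Gamma)$; this is checked on product basis elements, where it reduces to $\tr(AB)=\tr(A^TB^T)$ on the $B$ factor. Hence
\begin{align}
1=|\tr(ET)|=|\tr(E^\Gamma T^\Gamma)|\le\|E^\Gamma\|_\infty\,\|T^\Gamma\|_1
\end{align}
by Hölder's inequality, which holds for arbitrary, non-Hermitian, operators. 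This gives $\|E^\Gamma\|_\infty\ge\|T^\Gamma\|_1^{-1}$. Note that $\|T^\Gamma\|_1>0$ since $\tr T=1$. Minimizing over $E$ yields $W_0(P)\ge\|T^\Gamma\|_1^{-1}$.

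For the rate bound, the $n$-copy inequality gives $-\frac1n\log W_0(P^{\ox n})\le\log\|T^\Gamma\|_1$ for all $n$. The support projection of $\rho^{\ox n}$ is $P^{\ox n}$. Taking the limit in \eqref{eq:regularized-zero-error-rate}, justified by Lemma~\ref{lem:regularized-rate-exists}, gives \eqref{eq:rate-range-witness}. Alternatively, we can bound the sup form directly using $\widehat E^{(1)}_{0,\mathrm{PPT}}\le E^{(1)}_{0,\mathrm{PPT}}$.

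The only delicate step is the key identity $\tr(ET)=1$. It uses the \emph{one-sided} relation $EP=P$ together with $T=PT$. Writing $ET=EPT$ requires the projection to sit on the left of $T$, which is exactly the stated hypothesis. The argument therefore needs no Hermiticity of $T$, and the non-Hermitian freedom in $T$ is what makes the bound potentially strong. Everything else is routine: the tensor factorization of $\Gamma$ and of the trace norm, and Hölder duality.
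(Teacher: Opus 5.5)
Your proof is correct and is essentially the paper's argument: support rigidity gives $ET_n=T_n$, then Hölder's inequality applies to the partial-transpose-invariant pairing $\tr(XY)=\tr(X^\Gamma Y^\Gamma)$, and the trace norm is multiplicative under tensor powers. The only difference is organizational. You isolate the one-shot bound and then apply it to $(P^{\ox n},T^{\ox n})$, whereas the paper runs the same chain directly at blocklength $n$.
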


\begin{proof}
Set $P_n=P^{\ox n}$ and $T_n=T^{\ox n}$.  On $n$ copies, let $\Gamma_n$ denote partial transpose on the joint system $B_1\cdots B_n$; below we again write $\Gamma$ when the blocklength is
clear.  The condition $PT=T$ and $\tr T=1$ imply
\begin{align}
P_nT_n=T_n,\qquad
\tr T_n=(\tr T)^n=1.
\label{eq:tensor-witness-identities}
\end{align}
If $E$ is feasible for $W_0(P_n)$, by~\Cref{lem:support-rigidity}, we have that
$EP_n=P_n$.  Therefore
\begin{align}
ET_n=EP_nT_n=T_n.
\label{eq:tensor-effect-fixes-witness}
\end{align}
Consequently, H\"older's inequality gives
\begin{align}
    1=|\tr T_n|=|\tr(ET_n)|
    =|\tr((E^\Gamma)^\dagger T_n^\Gamma)|
    \le\|E^\Gamma\|_\infty\|T_n^\Gamma\|_1.
    \label{eq:witness-holder-chain}
\end{align}
Finally, by the fact that $\|T_n^\Gamma\|_1=\|(T^\Gamma)^{\ox n}\|_1=\|T^\Gamma\|_1^n$, we have that $\|E^\Gamma\|_\infty\ge\|T^\Gamma\|_1^{-n}$, proving Eq.~\eqref{eq:range-witness-block-bound}.  Taking $-n^{-1}\log$ and then the limit gives Eq.~\eqref{eq:rate-range-witness}.
\end{proof}

Pairing the range witness with a feasible min-Rains dual certificate gives a direct criterion for strict separation.
\begin{corollary}
\label{cor:two-certificate}
Let $P$ be a nonzero bipartite projection.  Suppose that operators $T$
and $R$ satisfy
\begin{align}
    PT=T,\qquad
    \tr T=1,\qquad
    R\ge P,\qquad
    \|T^\Gamma\|_1\|R^\Gamma\|_\infty<1.
    \label{eq:two-certificate-hypotheses}
\end{align}
Then every state $\rho$ with support projection $P$ satisfies
\begin{align}
    E_{0,\mathrm{PPT}}^\infty(\rho)<R_{\min}(\rho).
    \label{eq:entropic-separation-general}
\end{align}
\end{corollary}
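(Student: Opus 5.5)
The plan is to sandwich both quantities between the two certificates. \Cref{thm:range-supported-witness} gives an upper bound on the distillation rate from $T$. The dual form of $M(P)$ gives a lower bound on $R_{\min}$ from $R$. The strict product inequality then separates the two bounds.

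First, by \Cref{thm:range-supported-witness}, the hypotheses $PT=T$ and $\tr T=1$ give
\begin{align}
E_{0,\mathrm{PPT}}^\infty(\rho)\le\log\|T^\Gamma\|_1 .
\end{align}
Second, $R\ge P$ is feasible for the dual program in Eq.~\eqref{eq:min-rains-dual}, so $M(P)\le\|R^\Gamma\|_\infty$. Then Eq.~\eqref{eq:min-rains-definition} gives
\begin{align}
R_{\min}(\rho)=-\log M(P)\ge-\log\|R^\Gamma\|_\infty .
\end{align}

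Before taking logarithms I need both norms to be strictly positive and finite. For $T$, the condition $\tr T=1$ forces $T\neq0$, hence $T^\Gamma\ne0$ and $\|T^\Gamma\|_1>0$. For $R$, the condition $R\ge P\neq0$ forces $R\neq0$, hence $\|R^\Gamma\|_\infty>0$. One can also check $\|T^\Gamma\|_1\ge|\tr T^\Gamma|=|\tr T|=1$, although this is not strictly needed.

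Finally, the hypothesis $\|T^\Gamma\|_1\|R^\Gamma\|_\infty<1$ is equivalent to $\log\|T^\Gamma\|_1<-\log\|R^\Gamma\|_\infty$. Chaining the two bounds gives
\begin{align}
E_{0,\mathrm{PPT}}^\infty(\rho)\le\log\|T^\Gamma\|_1<-\log\|R^\Gamma\|_\infty\le R_{\min}(\rho),
\end{align}
which is Eq.~\eqref{eq:entropic-separation-general}. This holds for every state with support projection $P$, since both bounds depend only on $P$.

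The argument is essentially immediate. The only point needing care is the positivity check above, which makes the logarithms well defined.
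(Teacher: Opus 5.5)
Your proposal is correct and follows the paper's proof essentially line for line. The range-witness theorem bounds the rate by $\log\|T^\Gamma\|_1$, dual feasibility of $R$ gives $-\log\|R^\Gamma\|_\infty\le R_{\min}(\rho)$, and the strict product condition separates the two. Your explicit check that both norms are strictly positive, so the logarithms are well defined, is a small addition that the paper leaves implicit.
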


\begin{proof}
Theorem~\ref{thm:range-supported-witness} gives
$E_{0,\mathrm{PPT}}^\infty(\rho)\le\log\|T^\Gamma\|_1$.
The condition $R\ge P$ makes $R$ feasible in
\eqref{eq:min-rains-dual}; hence
\begin{align}
    M(P)\le\|R^\Gamma\|_\infty,
    \qquad
    -\log\|R^\Gamma\|_\infty\le R_{\min}(\rho).
\end{align}
The strict product condition is equivalent to
$\log\|T^\Gamma\|_1<-\log\|R^\Gamma\|_\infty$.  Combining the
three bounds gives
\begin{align}
    E_{0,\mathrm{PPT}}^\infty(\rho)
    \le\log\|T^\Gamma\|_1
    <-\log\|R^\Gamma\|_\infty
    \le R_{\min}(\rho).
\end{align}
\end{proof}

\section{An Explicit Qutrit--Qutrit Counterexample}
\label{sec:explicit-example}

The support is built from the three undirected edges of a qutrit cycle.  Each
edge contains the two oppositely oriented computational basis vectors.  The
three edge subspaces are mutually orthogonal, while partial transpose moves all
edge coherences into the three-dimensional diagonal subspace.  This produces a
small nontrivial block in which phases and weights can be controlled exactly.

In the computational basis, define
\begin{align}
u_0&=\frac{8\ket{01}+7\ket{10}}{\sqrt{113}},&
u_1&=\frac{7\ket{12}+8\ket{21}}{\sqrt{113}},
\label{eq:counterexample-vectors-first}\\
u_2&=\frac{\ket{20}+i\ket{02}}{\sqrt2},&
P&=\sum_{j=0}^2\ket{u_j}\!\bra{u_j},\qquad
\rho_\star=\frac13P.
\label{eq:counterexample-state}
\end{align}
The vectors $u_0,u_1,u_2$ are normalized and occupy mutually orthogonal
edge subspaces.  Hence they are orthonormal,
$P^2=P=P^\dagger$, and $\tr P=3$.  It follows that
$\rho_\star=P/3\ge0$ and $\tr\rho_\star=1$, so $\rho_\star$ is one
rank-three state with support projection $P$.  For a subspace $\cH$ with ordered basis
$\mathcal B=(\ket{b_1},\ldots,\ket{b_d})$, use convention
\begin{align}
\left[A\big|_{\cH}\right]_{\mathcal B}=(a_{jk})
\quad\Longleftrightarrow\quad
A\ket{b_k}=\sum_j a_{jk}\ket{b_j}.
\label{eq:restriction-matrix-convention}
\end{align}
Introduce the mutually orthogonal subspaces
\begin{align}
\cH_{\mathrm{diag}}
&\coloneqq\operatorname{span}\{\ket{00},\ket{11},\ket{22}\},
&\mathcal B_{\mathrm{diag}}&\coloneqq(\ket{00},\ket{11},\ket{22}),
\notag\\
\cH_{01}&\coloneqq\operatorname{span}\{\ket{01},\ket{10}\},
&\mathcal B_{01}&\coloneqq(\ket{01},\ket{10}),
\notag\\
\cH_{12}&\coloneqq\operatorname{span}\{\ket{12},\ket{21}\},
&\mathcal B_{12}&\coloneqq(\ket{12},\ket{21}),
\notag\\
\cH_{02}&\coloneqq\operatorname{span}\{\ket{02},\ket{20}\},
&\mathcal B_{02}&\coloneqq(\ket{02},\ket{20}).
\label{eq:block-bases}
\end{align}
Thus
\begin{align}
\CC^3\ox\CC^3
=\cH_{\mathrm{diag}}\oplus\cH_{01}\oplus\cH_{12}\oplus\cH_{02}.
\label{eq:block-decomposition}
\end{align}
In this decomposition, $P$ vanishes on $\cH_{\mathrm{diag}}$ and has the three
blocks
\begin{align}
\left[P\big|_{\cH_{01}}\right]_{\mathcal B_{01}}
&=\frac1{113}\begin{pmatrix}64&56\\56&49\end{pmatrix},
\label{eq:P-block-01}\\
\left[P\big|_{\cH_{12}}\right]_{\mathcal B_{12}}
&=\frac1{113}\begin{pmatrix}49&56\\56&64\end{pmatrix},
\label{eq:P-block-12}\\
\left[P\big|_{\cH_{02}}\right]_{\mathcal B_{02}}
&=\frac12\begin{pmatrix}1&i\\-i&1\end{pmatrix}.
\label{eq:P-block-02}
\end{align}
We display them because all later semidefinite comparisons reduce to these
three $2\times2$ blocks.

\begin{theorem}[Explicit qutrit--qutrit separation]
\label{thm:qutrit-counterexample}
Let $P$ be the rank-three projection in
\eqref{eq:counterexample-state}.  Every bipartite state $\rho$ with
support projection $P$ satisfies
\begin{align}
    E_{0,\mathrm{PPT}}^\infty(\rho)
    \le\log\|T^\Gamma\|_1
    <\log\frac{391}{250}
    <-\log\frac{6393}{10000}
    \le R_{\min}(\rho).
    \label{eq:strict-counterexample-chain}
\end{align}
Here $T$ is the range witness defined in Eqs.~\eqref{eq:xyz-definitions}--\eqref{eq:T-block-02}.
In particular,
$E_{0,\mathrm{PPT}}^\infty(\rho)<R_{\min}(\rho)$.
\end{theorem}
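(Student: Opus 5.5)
The chain splits into four inequalities; I would prove them in order, using Corollary~\ref{cor:two-certificate} as the logical skeleton.

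\textbf{First inequality.} $E_{0,\mathrm{PPT}}^\infty(\rho)\le\log\|T^\Gamma\|_1$ is Theorem~\ref{thm:range-supported-witness}. To apply it I need to check $PT=T$ and $\tr T=1$. I would build $T$ block-diagonally with respect to the decomposition \eqref{eq:block-decomposition}. Concretely, I would take $T=\sum_{j,k} t_{jk}\ket{u_j}\!\bra{w_{jk}}$ with columns in $\im(P)$; it is simplest to write $T$ as $\sum_j \ket{u_j}\!\bra{\tilde u_j}$-type terms. Then $PT=T$ holds automatically because every ket lies in $\operatorname{span}\{u_j\}$. The trace condition is a single linear normalization in the free parameters $x,y,z$ of \eqref{eq:xyz-definitions}.

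\textbf{Second inequality.} For the bound $\|T^\Gamma\|_1<391/250$, I use that partial transpose maps $\ketbra{ij}{ji}$ to $\ketbra{ii}{jj}$. Hence $T^\Gamma$ decomposes into two pieces. The first is a $3\times3$ block on $\cH_{\mathrm{diag}}$, built from the edge coherences. The second is diagonal entries on the off-diagonal basis vectors $\ket{ij}$ with $i\ne j$, coming from the diagonal entries of the edge blocks. The trace norm is then the sum of the absolute values of those diagonal entries plus the trace norm of the $3\times3$ block. To certify the bound exactly I would not diagonalize. Instead I would exhibit a dual unitary (or contraction) $U$ with $\|U\|_\infty\le1$ and use the identity $\|X\|_1=\max_{\|U\|_\infty\le1}|\tr(U^\dagger X)|$. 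This is the wrong direction for an upper bound, however. For an upper bound I would instead decompose the $3\times3$ block as a sum of rank-one pieces, or as $X=AB$ with $\|A\|_2\|B\|_2$ small, and bound via the triangle inequality or Cauchy--Schwarz with rational entries. Alternatively, I would compute $\tr\sqrt{X^\dagger X}$ by bounding the eigenvalues of the explicit $3\times3$ Gram matrix with rational interval estimates.

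\textbf{Third and fourth inequalities.} The third, $391/250<10000/6393$, is arithmetic. For the fourth, by the argument of Corollary~\ref{cor:two-certificate} it suffices to exhibit a Hermitian $R\ge P$ with $\|R^\Gamma\|_\infty\le 6393/10000$. I would again take $R$ block-diagonal. On each edge block, $R$ is a $2\times2$ matrix dominating the displayed block of $P$, and on $\cH_{\mathrm{diag}}$ it is a $3\times3$ matrix. Under $\Gamma$, the off-diagonal entries of the edge blocks become entries of a $3\times3$ matrix on $\cH_{\mathrm{diag}}$, and the diagonal entries move to $\cH_{\mathrm{diag}}$'s complement. Checking $R\ge P$ reduces to three $2\times2$ PSD conditions, each being nonnegative diagonals plus a determinant condition. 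Checking $\|R^\Gamma\|_\infty\le c$ reduces to $c\idop\pm R^\Gamma\ge0$, equivalently, since $R^\Gamma$ is Hermitian, to $c\idop-R^\Gamma\ge0$ and $c\idop+R^\Gamma\ge0$. On the $3\times3$ block this can be verified exactly via leading principal minors with rational arithmetic.

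\textbf{Main obstacle.} I expect the hardest step to be the exact certification of $\|T^\Gamma\|_1<391/250$. The trace norm of a non-Hermitian $3\times3$ block carrying the phase $i$ from $u_2$ has no closed form. My first attempt would be to write the block as a Hermitian-times-phase matrix, or to exhibit an explicit polar-type factorization with rational entries. If that fails, I would bound the trace norm by $\sum_k\sqrt{\lambda_k(X^\dagger X)}$ with each eigenvalue bracketed by Sturm sequences on the characteristic cubic.
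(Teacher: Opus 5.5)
Your skeleton matches the paper's: Theorem~\ref{thm:range-supported-witness} for the first link, the splitting $T^\Gamma=\operatorname{diag}(x,y,y,x,z,z)\oplus K$, an arithmetic comparison, and a block-diagonal min-Rains dual point for the last link. \textbf{The genuine gap is that you never produce the certificate $R$.} The last link, $-\log\frac{6393}{10000}\le R_{\min}(\rho)$, is an existence claim about such a point, and you only describe how you would check a candidate. This is the substance of that inequality, not a detail. The two constants differ by only $\frac{250}{391}-\frac{6393}{10000}=\frac{337}{3910000}<10^{-4}$, and the paper's numerics put the best value of its ansatz at $m\approx0.63925$. So $R$ must be essentially SDP-optimal and then rounded to rationals on the feasible side. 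The paper's $R$ vanishes on $\cH_{\mathrm{diag}}$. Its edge blocks are $\left(\begin{smallmatrix}m&r\\ r&m\end{smallmatrix}\right)$ on $\cH_{01}$ and $\cH_{12}$, and $\left(\begin{smallmatrix}m&ie\\ -ie&m\end{smallmatrix}\right)$ on $\cH_{02}$, with $m=\frac{6393}{10000}$, $r=\frac{933}{2500}$, $e=\frac{3607}{10000}$. The structural idea missing from your template is to make all six population entries equal. Then $R^\Gamma=m\idop_6\oplus G$, where $G$ has zero diagonal and coherences $r,r,ie$. Because the cyclic product of these coherences is purely imaginary, $G$ has characteristic polynomial $\lambda[\lambda^2-(2r^2+e^2)]$. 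Hence $\|R^\Gamma\|_\infty=m$ follows from the single margin $m^2-(2r^2+e^2)=\frac{17}{390625}$. Domination $R\ge P$ reduces to the determinant margin $\frac{2133}{90400000}$ on the two real blocks, and to the rank-one block $h\ketbra{v}{v}$, with $h=\frac{1393}{10000}$ and $v=(1,i)^{\mathsf T}$, on $\cH_{02}$.

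You also misjudge the second link, which is not the hard part. $K$ is not an opaque non-Hermitian block: $\det K=0$, and $K=a_1b_1^\dagger+a_2b_2^\dagger$ with $a_1=(iz/2,\beta,-iz/2)^{\mathsf T}$, $b_1=a_2=(1,0,1)^{\mathsf T}$, $b_2=(iz/2,\alpha,-iz/2)^{\mathsf T}$. Here $a_1\perp a_2$ and $b_1\perp b_2$, so the nonzero singular values are exactly $\sqrt{2\beta^2+z^2}$ and $\sqrt{2\alpha^2+z^2}$. The bound then reduces to two squared rational comparisons, against $\frac{1791}{5000}$ and $\frac{1029}{5000}$. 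Your Gram-matrix fallback would reach the same place, since the characteristic polynomial of $K^\dagger K$ is $\lambda(\lambda-\sigma_1^2)(\lambda-\sigma_2^2)$. Your triangle-inequality route, however, works only with this orthogonal splitting. Splitting into matrix units gives $\|K\|_1\le 2(\alpha+\beta+z)\approx0.96$, whereas $\|K\|_1<\frac{141}{250}=0.564$ is needed and the true value is about $0.5639$.
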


The following subsections verify the two certificates exactly; the numerical
search that suggested their sparse form is reconstructed in
Section~\ref{sec:discovery}.
Fix an arbitrary state $\rho$ with support projection $P$ for the remainder
of the proof.

\subsection{Certificate for the upper bound}
We first construct a one-sided range witness $T$, verify
$PT=T$ and $\tr T=1$, and then compute the singular values of
$T^\Gamma$ to prove $\|T^\Gamma\|_1<391/250$.  Set
\begin{align}
x&=\frac{1319}{5000},\qquad y=\frac{444}{5000},\\
z&=\frac{737}{5000},\qquad x+y+z=\frac12.
\label{eq:xyz-definitions}
\end{align}
Define $T$ to vanish on $\cH_{\mathrm{diag}}$.  Its other blocks are
\begin{align}
\left[T\big|_{\cH_{01}}\right]_{\mathcal B_{01}}
&=\begin{pmatrix}x&8y/7\\7x/8&y\end{pmatrix},
\label{eq:T-block-01}\\
\left[T\big|_{\cH_{12}}\right]_{\mathcal B_{12}}
&=\begin{pmatrix}y&7x/8\\8y/7&x\end{pmatrix},
\label{eq:T-block-12}\\
\left[T\big|_{\cH_{02}}\right]_{\mathcal B_{02}}
&=\begin{pmatrix}z&iz\\-iz&z\end{pmatrix}.
\label{eq:T-block-02}
\end{align}
Each block has rank one, and every column belongs to the corresponding
one-dimensional component of $\im(P)$.  This can be checked by
\begin{align}
\begin{pmatrix}x&8y/7\\7x/8&y\end{pmatrix}
&=\begin{pmatrix}8\\7\end{pmatrix}
\begin{pmatrix}x/8&y/7\end{pmatrix},
\label{eq:T-factor-01}\\
\begin{pmatrix}y&7x/8\\8y/7&x\end{pmatrix}
&=\begin{pmatrix}7\\8\end{pmatrix}
\begin{pmatrix}y/7&x/8\end{pmatrix},
\label{eq:T-factor-12}\\
\begin{pmatrix}z&iz\\-iz&z\end{pmatrix}
&=\begin{pmatrix}i\\1\end{pmatrix}
\begin{pmatrix}-iz&z\end{pmatrix}.
\label{eq:T-factor-02}
\end{align}
The left factors in Eqs.~\eqref{eq:T-factor-01}--\eqref{eq:T-factor-02} are
proportional to the coordinate vectors of $u_0,u_1,u_2$ in their edge bases.
The range of a rank-one matrix $ab^{\mathsf T}$ is contained in
$\operatorname{span}\{a\}$.  Hence the range of each block of $T$ lies in the
corresponding support line, and the vanishing diagonal block contributes no
additional range.  Therefore $\im(T)\subseteq\im(P)$.  Since $P$ acts as
the identity on every column of $T$,
\begin{align}
PT=T.
\label{eq:PT-equals-T-explicit}
\end{align}
Moreover,
\begin{align}
\tr T=2(x+y+z)=1.
\label{eq:trace-T-explicit}
\end{align}
It remains to calculate $\|T^\Gamma\|_1$.  Put
\begin{align}
\alpha=\frac{8y}{7},\qquad \beta=\frac{7x}{8}.
\end{align}
To calculate the partial transpose, apply
$\ketbra{ab}{cd}\mapsto\ketbra{ad}{cb}$ to every matrix unit.  The terms $\ketbra{ab}{ab}$ are unchanged.  Therefore, on the six ordered vectors
$\ket{01},\ket{10},\ket{12},\ket{21},\ket{02},\ket{20}$,
$T^\Gamma$ acts by the positive scalars $x,y,y,x,z,z$.  Their total
contribution to the trace norm is $2(x+y+z)=1$.  Each coherence
$\ket{ab}\!\bra{ba}$ is instead mapped to
$\ket{aa}\!\bra{bb}$, so all six coherence terms assemble on
$\cH_{\mathrm{diag}}$.  Reading their coefficients in the ordered basis
$\mathcal B_{\mathrm{diag}}$ gives
\begin{align}
K\coloneqq\left[T^\Gamma\big|_{\cH_{\mathrm{diag}}}\right]_{\mathcal B_{\mathrm{diag}}}
=\begin{pmatrix}
    0&\alpha&iz\\
    \beta&0&\beta\\
    -iz&\alpha&0
\end{pmatrix}.
\label{eq:K-block}
\end{align}
Thus $T^\Gamma=\operatorname{diag}(x,y,y,x,z,z)\oplus K$.  Since the trace norm is additive on orthogonal direct sums, it remains to calculate
$\|K\|_1$.  Notice that
\begin{align}
K={}&
\begin{pmatrix}iz/2\\ \beta\\-iz/2\end{pmatrix}
\begin{pmatrix}1&0&1\end{pmatrix}
+
\begin{pmatrix}1\\0\\1\end{pmatrix}
\begin{pmatrix}-iz/2&\alpha&iz/2\end{pmatrix}
=a_1b_1^\dagger+a_2b_2^\dagger,
\label{eq:K-rank-one-decomposition}
\end{align}
where
\begin{align}
a_1&=(iz/2,\ \beta,\ -iz/2)^{\mathsf T},&
b_1&=(1,\ 0,\ 1)^{\mathsf T},\notag\\
a_2&=(1,\ 0,\ 1)^{\mathsf T},&
b_2&=(iz/2,\ \alpha,\ -iz/2)^{\mathsf T}.
\label{eq:K-orthogonal-factors}
\end{align}
Direct calculation gives $a_1^\dagger a_2=0$ and
$b_1^\dagger b_2=0$.  Thus the two rank-one maps have orthogonal output spaces
and orthogonal input spaces.  In orthonormal bases adapted to these spaces,
$K$ is the direct sum of two rank-one scalar blocks and one zero block.  Its
two nonzero singular values are therefore
\begin{align}
\|a_1\|\|b_1\|
=\sqrt{\left(\beta^2+\frac{z^2}{2}\right)2}
=\sqrt{2\beta^2+z^2},\qquad
\|a_2\|\|b_2\|
=\sqrt{2\left(\alpha^2+\frac{z^2}{2}\right)}
=\sqrt{2\alpha^2+z^2}.
\end{align}
Hence the singular-value multiset of $K$ is
\begin{align}
\{0,\sigma_1,\sigma_2\},
\qquad
\sigma_1=\sqrt{2\beta^2+z^2},\quad
\sigma_2=\sqrt{2\alpha^2+z^2}.
\label{eq:K-singular-values}
\end{align}
Equivalently,
\begin{align}
\sigma_1^2=\frac{49}{32}x^2+z^2,\qquad
\sigma_2^2=\frac{128}{49}y^2+z^2.
\end{align}
The exact comparisons
\begin{align}
\left(\frac{1791}{5000}\right)^2-\sigma_1^2
=\frac{3219}{160000000}>0,
\qquad
\left(\frac{1029}{5000}\right)^2-\sigma_2^2
=\frac{863}{30625000}>0
\label{eq:sigma2-margin}
\end{align}
therefore imply
$\sigma_1<1791/5000$ and $\sigma_2<1029/5000$, since all quantities in Eq.~\eqref{eq:K-singular-values} are nonnegative. Consequently,
\begin{align}
\mu\coloneqq\|T^\Gamma\|_1
=1+\sigma_1+\sigma_2
<1+\frac{1791+1029}{5000}
=\frac{391}{250}.
\label{eq:T-pt-norm}
\end{align}
Theorem~\ref{thm:range-supported-witness} and the strict bound on $\mu$ give
\begin{align}
E_{0,\mathrm{PPT}}^\infty(\rho)
\le\log\mu<\log\frac{391}{250}.
\label{eq:rate-upper-counterexample}
\end{align}

\subsection{Certificate for the min-Rains lower bound}
Second, we construct a Hermitian operator $R$, prove
$R-P\ge0$ block by block, and compute
$\|R^\Gamma\|_\infty=6393/10000$.
Set
\begin{align}
m=\frac{6393}{10000},\qquad
r=\frac{933}{2500},\qquad
e=\frac{3607}{10000}.
\label{eq:mre-definitions}
\end{align}
Define a Hermitian operator $R$ that vanishes on $\cH_{\mathrm{diag}}$ and has
the edge blocks
\begin{align}
\left[R\big|_{\cH_{01}}\right]_{\mathcal B_{01}}
=\left[R\big|_{\cH_{12}}\right]_{\mathcal B_{12}}
&=\begin{pmatrix}m&r\\r&m\end{pmatrix},
\label{eq:R-real-blocks}\\
\left[R\big|_{\cH_{02}}\right]_{\mathcal B_{02}}
&=\begin{pmatrix}m&ie\\-ie&m\end{pmatrix}.
\label{eq:R-complex-block}
\end{align}
The two real blocks are symmetric, and the off-diagonal entries $ie$ and $-ie$
in the third block are complex conjugates.  Thus every block is Hermitian and
so is their orthogonal direct sum $R$.
This certificate is intentionally unavailable to the exact-distillation
program: each real block has eigenvalue $m+r=\frac{81}{80}>1$, so $R\nleq\idop$. We next verify the required domination
$R\ge P$.

The difference $R-P$ is block diagonal with respect to
Eq.~\eqref{eq:block-decomposition}, because both $R$ and $P$ preserve every
summand.  It vanishes on $\cH_{\mathrm{diag}}$, where both operators vanish.
On each of $\cH_{01}$ and $\cH_{12}$, its two diagonal entries are
\begin{align}
m-\frac{64}{113}
&=\frac{82409}{1130000}>0,
\label{eq:RminusP-diagonal1}\\
m-\frac{49}{113}
&=\frac{232409}{1130000}>0.
\label{eq:RminusP-diagonal2}
\end{align}
The off-diagonal entry on either real block is $r-56/113$.  Hence their common
determinant is
\begin{equation}\label{eq:RminusP-real-determinant}
\begin{aligned}
    \left(m-\frac{64}{113}\right)
    \left(m-\frac{49}{113}\right)
    -\left(r-\frac{56}{113}\right)^2
    =\frac{2133}{90400000}>0.
\end{aligned}
\end{equation}
A $2\times2$ Hermitian matrix
$\left(\begin{smallmatrix}a&c\\\overline c&d\end{smallmatrix}\right)$ is
positive definite when $a>0$ and $ad-|c|^2>0$. The second condition then forces $d>0$.  Eq.~\eqref{eq:RminusP-diagonal1} and
Eq.~\eqref{eq:RminusP-real-determinant} therefore prove that both real blocks of
$R-P$ are positive definite.  On $\cH_{02}$, subtracting
Eq.~\eqref{eq:P-block-02} from Eq.~\eqref{eq:R-complex-block} gives
\begin{align}
\left[(R-P)\big|_{\cH_{02}}\right]_{\mathcal B_{02}}
=\begin{pmatrix}h&-ih\\ih&h\end{pmatrix},
\quad
h=m-\frac12=\frac{1393}{10000}>0,
\label{eq:RminusP-complex-block}
\end{align}
where $e-1/2=-h$.  The displayed matrix equals
$h\ket v\!\bra v$ for the coordinate vector
$v=(1,i)^{\mathsf T}$, whose squared norm is two.  Its eigenvalues are
therefore $0$ and $2h$, both nonnegative.  All four mutually orthogonal blocks
of $R-P$ are now positive semidefinite.  A block-diagonal Hermitian operator is
positive semidefinite if and only if every diagonal block is positive
semidefinite, so
\begin{align}
R-P\ge0.
\label{eq:R-dominates-P}
\end{align}

After partial transpose, the six population entries of the three edge blocks
remain on their original computational basis vectors and all equal $m$.
The coherence entries move to $\cH_{\mathrm{diag}}$ by the rule used above for
$T$.  Consequently, $R^\Gamma$ equals $m\idop$ on the six-dimensional
direct sum of the edge coordinates, and its remaining block is
\begin{align}
G\coloneqq\left[R^\Gamma\big|_{\cH_{\mathrm{diag}}}\right]_{\mathcal B_{\mathrm{diag}}}
=\begin{pmatrix}
    0&r&ie\\
    r&0&r\\
    -ie&r&0
\end{pmatrix}.
\label{eq:G-block}
\end{align}
The matrix $G$ is Hermitian.  Expanding its characteristic polynomial gives
\begin{align}
\det(\lambda\idop-G)
=\lambda\bigl[\lambda^2-(2r^2+e^2)\bigr].
\label{eq:G-characteristic-polynomial}
\end{align}
Its three eigenvalues are therefore
\begin{align}
\operatorname{spec}(G)
=\{0,\ \pm\sqrt{2r^2+e^2}\}.
\label{eq:G-spectrum}
\end{align}
The exact margin
\begin{align}
m^2-(2r^2+e^2)=\frac{17}{390625}>0
\label{eq:R-pt-margin}
\end{align}
shows that $m^2>2r^2+e^2$.  Since $m>0$, we have $m>\sqrt{2r^2+e^2}$. Thus the spectral radius of $G$ is strictly smaller than $m$. The other six eigenvalues of the Hermitian operator $R^\Gamma$ equal $m$. Hence
\begin{align}
\|R^\Gamma\|_\infty=m=\frac{6393}{10000}.
\label{eq:R-pt-norm}
\end{align}
Eq.~\eqref{eq:R-dominates-P} makes $R$ feasible for the minimization in Eq.~\eqref{eq:min-rains-dual}. It follows that
\begin{align}
M(P)\le\frac{6393}{10000}.
\label{eq:M-upper-counterexample}
\end{align}

\begin{proof}[Proof of Theorem~\ref{thm:qutrit-counterexample}]
Eq.~\eqref{eq:PT-equals-T-explicit} and Eq.~\eqref{eq:trace-T-explicit} show that the range certificate satisfies exactly
the hypotheses of Theorem~\ref{thm:range-supported-witness}.  Its norm estimate
\eqref{eq:T-pt-norm} therefore proves
\eqref{eq:rate-upper-counterexample}.  Independently,
\eqref{eq:R-dominates-P} and~\eqref{eq:R-pt-norm} make $R$ a feasible
min-Rains dual point and prove~\eqref{eq:M-upper-counterexample}.

It remains only to verify that the two certified bounds are strictly ordered.
Putting them over the common denominator $3910000$ gives
\begin{align}
    \frac{250}{391}-\frac{6393}{10000}
    =\frac{2500000-2499663}{3910000}
    =\frac{337}{3910000}>0.
    \label{eq:explicit-rational-gap-expanded}
\end{align}
Combining the range-witness rate bound, the estimate on $\mu$, the exact
rational comparison, and the dual-certificate bound gives
\begin{align}
    E_{0,\mathrm{PPT}}^\infty(\rho)
    \leqt{(i)}\log\mu
    \lt{(ii)}\log\frac{391}{250}
    \lt{(iii)}-\log\frac{6393}{10000}
    \eqt{(iv)}-\log\|R^\Gamma\|_\infty
    \le R_{\min}(\rho).
    \label{eq:decisive-exact-chain}
\end{align}
Here (i) follows from Theorem~\ref{thm:range-supported-witness};
(ii) is Eq.~\eqref{eq:T-pt-norm}; (iii) is
Eq.~\eqref{eq:explicit-rational-gap-expanded}; (iv) is
Eq.~\eqref{eq:R-pt-norm}; and the last inequality follows from dual
feasibility of $R$.
\end{proof}

\section{How the Counterexample Was Found}
\label{sec:discovery}

We now reconstruct the discovery procedure in enough detail to explain both the support and the
apparently irregular rational coefficients.  Numerics were used in three distinct roles:
\begin{itemize}
    \item To search over a structured family of state supports $P$.
    \item To identify $T$ on a fixed support.
    \item To locate points with sufficient numerical margin for rational approximation.
\end{itemize}

\subsection{A nested search over weighted cycle supports}

We restricted the outer search to the weighted triangle-cycle family
\begin{align}
u_j(\theta_j,\phi_j)
=\cos\theta_j\ket{j,j+1}
+e^{i\phi_j}\sin\theta_j\ket{j+1,j},
\quad j\in\mathbb Z_3,
\label{eq:weighted-cycle-family}
\end{align}
where both indices are understood modulo three.  The associated support
projection is
$P(\boldsymbol\theta,\boldsymbol\phi)
=\sum_j\ket{u_j}\!\bra{u_j}$.  The three vectors occupy disjoint pairs of
computational basis states and are therefore orthonormal for every choice of
parameters.  The family is also adapted to partial transpose.  Populations
remain on the six oriented edge vectors, whereas every edge coherence moves
to the diagonal qutrit subspace; for example,
$(\ket{01}\!\bra{10})^\Gamma=\ket{00}\!\bra{11}$.  Consequently, every
edge-respecting operator becomes six scalar blocks plus one $3\times3$
coherence block after partial transpose.  The weights control the domination constraints on the edge blocks, while the phases control interference in the coherence block.

The witness $T$ need not be Hermitian.  We therefore use the standard semidefinite representation of the trace norm, which is valid for every rectangular complex matrix (see, e.g.,~\cite[Example~1.20]{Watrous2018})
\begin{align}
\|A\|_1
=\min_{X,Y\ge0}\left\{
\frac12\bigl(\tr X+\tr Y\bigr):
\begin{pmatrix}
    X&A\\
    A^\dagger&Y
\end{pmatrix}\ge0
\right\}.
\label{eq:general-trace-norm-sdp}
\end{align}
For each support in the outer family, we evaluated two inner convex programs.
The first was the one-sided range-witness cost
\begin{align}
\mu_\star(P)
\coloneqq\min\bigl\{\|T^\Gamma\|_1:PT=T,\ \tr T=1\bigr\}.
\label{eq:discovery-witness-program}
\end{align}
Applying Eq.~\eqref{eq:general-trace-norm-sdp} to $A=T^\Gamma$ gives the
explicit SDP
\begin{align}
\mu_\star(P)=\min_{T,Y,Z}\quad
&\frac12\tr(Y+Z)\notag\\
\text{subject to}\quad
&PT=T,\qquad \tr T=1,\notag\\
&Y,Z\ge0,\qquad
\begin{pmatrix}
    Y&T^\Gamma\\
    (T^\Gamma)^\dagger&Z
\end{pmatrix}\ge0.
\label{eq:discovery-witness-sdp}
\end{align}
The second inner problem was the min-Rains support program $M(P)$ in
Eqs.~\eqref{eq:min-rains-primal} and~\eqref{eq:min-rains-dual}.  We used
\begin{align}
\mathcal S(P)\coloneqq\mu_\star(P)M(P)
\label{eq:discovery-support-score}
\end{align}
as the search score.  The computational loop formed $P$ from one set of six outer parameters, solved the two inner SDPs, and returned $\mathcal S(P)$ to the outer search.  A value below one predicts that the two programs admit strictly separated certificates.  Optimization of $\mathcal S$ over $(\boldsymbol\theta,\boldsymbol\phi)$ is nonconvex and was used only to find candidates.  The search produced a candidate close to the angles and phases
\begin{align}
(\theta_0,\theta_1,\theta_2)=\left(\arctan\frac78,\arctan\frac87,\frac\pi4\right),\quad (\phi_0,\phi_1,\phi_2)&=\left(0,0,\frac\pi2\right).
\label{eq:exact-cycle-parameters}
\end{align}
The first two edges therefore have the reciprocal weight ratios $8:7$ and
$7:8$, while the last edge has equal weights and a quarter-turn phase.  The
latter phase makes the product of the three coherence phases purely
imaginary.  This is why the nontrivial blocks in Eq.~\eqref{eq:K-block} and Eq.~\eqref{eq:G-block} respectively acquire an orthogonal rank-one decomposition and the simple spectrum
$\{0,\pm\sqrt{2r^2+e^2}\}$.  Replacing the numerical angles by Eq.~\eqref{eq:exact-cycle-parameters} gives exactly the vectors in Eq.~\eqref{eq:counterexample-vectors-first}.

\subsection{Reduced search for the one-sided witness}

After fixing the support in~\eqref{eq:counterexample-state}, the numerical
solution of~\eqref{eq:discovery-witness-sdp} suggested one rank-one block on
each edge.  The range condition largely forces this block pattern.  For
example, the first support line has coordinate vector $(8,7)^{\mathsf T}$.
A rank-one block with this output range and diagonal entries $x,y$ must be
\begin{align}
\begin{pmatrix}8\\7\end{pmatrix}
\begin{pmatrix}x/8&y/7\end{pmatrix}
=
\begin{pmatrix}x&8y/7\\7x/8&y\end{pmatrix}.
\label{eq:discovery-forced-T-block}
\end{align}
The other two edge blocks follow in the same way.  Within the positive three-parameter block pattern of Eqs.~\eqref{eq:T-block-01}--\eqref{eq:T-block-02}, trace normalization gives $x+y+z=1/2$, and the singular-value calculation in Eq.~\eqref{eq:K-singular-values} reduces the numerical witness search to
\begin{align}
\min_{\substack{x,y,z\ge0\\x+y+z=1/2}}
\left[
1+\sqrt{\frac{49}{32}x^2+z^2}
+\sqrt{\frac{128}{49}y^2+z^2}
\right].
\label{eq:reduced-witness-search}
\end{align}
A numerical minimizer of this restricted problem is
\begin{align}
(x,y,z)\approx
(0.263786245,\ 0.088832207,\ 0.147381548),
\label{eq:numerical-witness-candidate}
\end{align}
with objective value approximately $1.563903435$.  The rational point in Eq.~\eqref{eq:xyz-definitions} was selected nearby.  Its numerators sum to $2500$, so $x+y+z=1/2$ remains exact, and its restricted objective differs from the numerical value in Eq.~\eqref{eq:numerical-witness-candidate} by less
than $7\times10^{-9}$.  The fractions $1791/5000$ and $1029/5000$ were then chosen as strict rational upper bounds on the two singular values.

\subsection{Reduced search for the min-Rains certificate}

The numerical min-Rains solution suggested the three-parameter Hermitian ansatz in Eqs.~\eqref{eq:R-real-blocks}--\eqref{eq:R-complex-block}.  For this ansatz, the condition $R-P\ge0$ is equivalent to the edge constraints
\begin{align}
m&\ge\frac{64}{113},\qquad m\ge\frac{49}{113},
\label{eq:reduced-R-diagonal-constraints}\\
\left(m-\frac{64}{113}\right)
\left(m-\frac{49}{113}\right)
&\ge\left(r-\frac{56}{113}\right)^2,
\label{eq:reduced-R-real-constraint}\\
m-\frac12&\ge\left|e-\frac12\right|.
\label{eq:reduced-R-complex-constraint}
\end{align}
Moreover, the partial-transpose calculation gives six eigenvalues equal to
$m$ and three eigenvalues $0,\pm\sqrt{2r^2+e^2}$.  Hence
$\|R^\Gamma\|_\infty=\max\{m,\sqrt{2r^2+e^2}\}$.  Imposing the spectral
inequality below makes the objective equal to $m$, so the restricted search
can be written as
\begin{align}
\min_{m,r,e\ge0}\quad&m\notag\\
\text{subject to}\quad
&\eqref{eq:reduced-R-diagonal-constraints}
\text{--}\eqref{eq:reduced-R-complex-constraint},\notag\\
&2r^2+e^2\le m^2.
\label{eq:reduced-R-search}
\end{align}
The geometry of these constraints explains the final numbers.  On the lower
branches relevant to the numerical candidate, fixed $m$ gives $e=1-m$ when
\eqref{eq:reduced-R-complex-constraint} is active and gives
\begin{align}
r=\frac{56}{113}
-\sqrt{\left(m-\frac{64}{113}\right)
    \left(m-\frac{49}{113}\right)}.
\label{eq:reduced-R-boundary-r}
\end{align}
The numerical candidate at the lower boundary also saturates the spectral
constraint.  Solving these three boundary relations numerically gives
\begin{align}
(m,r,e)\approx
(0.639249331,\ 0.373161267,\ 0.360750669).
\label{eq:numerical-R-candidate}
\end{align}
The rational values in Eq.~\eqref{eq:mre-definitions} move this numerical boundary point to the feasible side of both the determinant and spectral constraints.

\section{Conclusion}
\label{sec:discussion}

We have shown that the additive min-Rains relative entropy is not an exact formula for regularized exact PPT distillable entanglement.  The separation is caused by a tensor-stable geometric constraint: an exact-distillation effect satisfying $P\le E\le\idop$ is fixed to the identity on the input support, whereas a min-Rains dual operator needs only to dominate $P$ and may exceed the identity.  A one-sided range witness converts this difference into a multiplicative bound, and an explicit rank-three qutrit support admits exact certificates with strictly ordered values. It still remains an interesting open problem to determine what the closed-form formula for the exact PPT distillable entanglement is if it is not the min-Rains relative entropy.

\section*{Acknowledgements}
The central conceptual ideas, structural inequalities, and proof strategy were developed by the authors. Large language models were used to turn~\Cref{cor:two-certificate} into the concrete counterexample, including carrying out the numerical optimization, identifying candidate coefficients, and assisting with their exact rational certification.  The authors reviewed and verified the AI-assisted identities, inequalities, and certificates and take full responsibility for the results. We thank
Kun Fang, Ludovico Lami, and Bartosz Regula for many discussions on
this problem. This work was partially supported by the National Natural Science Foundation of China (Grant Nos.~92576114, 12447107) and the Guangdong Provincial Quantum Science Strategic Initiative (Grant Nos.~GDZX2403008, GDZX2503001, and GDZX2403001).

\bibliographystyle{alpha}
\bibliography{ref}

@book{Watrous2018,
  author    = {John Watrous},
  title     = {The Theory of Quantum Information},
  publisher = {Cambridge University Press},
  year      = {2018},
  doi       = {10.1017/9781316848142}
}

@book{Hayashi2017b,
address = {Berlin, Heidelberg},
author = {Hayashi, Masahito},
doi = {10.1007/978-3-662-49725-8},
isbn = {978-3-662-49723-4},
publisher = {Springer Berlin Heidelberg},
series = {Graduate Texts in Physics},
title = {{Quantum Information Theory}},
url = {http://link.springer.com/10.1007/978-3-662-49725-8},
year = {2017}
}

@article{Khatri2025a,
author = {Khatri, Sumeet and Lami, Ludovico and Wilde, Mark M},
title = {{Principles of Quantum Communication Theory: A Modern Approach}},
year = {2025}
}

@article{XWthesis,
author = {Wang, Xin},
journal = {PhD thesis},
title = {{Semidefinite optimization for quantum information}},
url = {https://opus.lib.uts.edu.au/handle/10453/127996},
year = {2018}
}

@article{Matthews2008,
  title = {Pure-state transformations and catalysis under operations that completely preserve positivity of partial transpose},
  author = {Matthews, William and Winter, Andreas},
  journal = {Phys. Rev. A},
  volume = {78},
  issue = {1},
  pages = {012317},
  numpages = {8},
  year = {2008},
  month = {Jul},
  publisher = {American Physical Society},
  doi = {10.1103/PhysRevA.78.012317},
  url = {https://link.aps.org/doi/10.1103/PhysRevA.78.012317}
}

@article{Duan2005,
  title = {Efficiency of deterministic entanglement transformation},
  author = {Duan, Runyao and Feng, Yuan and Ji, Zhengfeng and Ying, Mingsheng},
  journal = {Phys. Rev. A},
  volume = {71},
  issue = {2},
  pages = {022305},
  numpages = {6},
  year = {2005},
  month = {Feb},
  publisher = {American Physical Society},
  doi = {10.1103/PhysRevA.71.022305},
  url = {https://link.aps.org/doi/10.1103/PhysRevA.71.022305}
}

@article{Rains1999r,
  title = {Rigorous treatment of distillable entanglement},
  author = {Rains, E. M.},
  journal = {Phys. Rev. A},
  volume = {60},
  issue = {1},
  pages = {173--178},
  numpages = {0},
  year = {1999},
  month = {Jul},
  publisher = {American Physical Society},
  doi = {10.1103/PhysRevA.60.173},
  url = {https://link.aps.org/doi/10.1103/PhysRevA.60.173}
}

@article{Wang2016c,
author = {Wang, Xin and Duan, Runyao},
doi = {10.1103/PhysRevA.95.062322},
issn = {2469-9926},
journal = {Physical Review A},
month = {jun},
number = {6},
pages = {062322},
title = {{Nonadditivity of Rains' bound for distillable entanglement}},
url = {http://link.aps.org/doi/10.1103/PhysRevA.95.062322},
volume = {95},
year = {2017}
}

@article{Wang2016,
author = {Wang, Xin and Duan, Runyao},
doi = {10.1103/PhysRevA.94.050301},
journal = {Physical Review A},
month = {nov},
number = {5},
pages = {050301},
title = {{Improved semidefinite programming upper bound on distillable entanglement}},
url = {https://doi.org/10.1103/PhysRevA.94.050301},
volume = {94},
year = {2016}
}

@article{Rains1999,
  title = {Bound on distillable entanglement},
  author = {Rains, E. M.},
  journal = {Phys. Rev. A},
  volume = {60},
  issue = {1},
  pages = {179--184},
  numpages = {0},
  year = {1999},
  month = {Jul},
  publisher = {American Physical Society},
  doi = {10.1103/PhysRevA.60.179},
  url = {https://link.aps.org/doi/10.1103/PhysRevA.60.179}
}

@article{Rains2001,
  author = {Rains, Eric M.},
  title = {A semidefinite program for distillable entanglement},
  journal = {IEEE Transactions on Information Theory},
  volume = {47},
  number = {7},
  pages = {2921--2933},
  year = {2001},
  doi = {10.1109/18.959270}
}

@article{Bennett1996Purification,
  author = {Bennett, Charles H. and Brassard, Gilles and Popescu, Sandu and Schumacher, Benjamin and Smolin, John A. and Wootters, William K.},
  title = {Purification of Noisy Entanglement and Faithful Teleportation via Noisy Channels},
  journal = {Physical Review Letters},
  volume = {76},
  number = {5},
  pages = {722--725},
  year = {1996},
  doi = {10.1103/PhysRevLett.76.722}
}

@article{Bennett1996Mixed,
  author = {Bennett, Charles H. and DiVincenzo, David P. and Smolin, John A. and Wootters, William K.},
  title = {Mixed-State Entanglement and Quantum Error Correction},
  journal = {Physical Review A},
  volume = {54},
  number = {5},
  pages = {3824--3851},
  year = {1996},
  doi = {10.1103/PhysRevA.54.3824}
}

@article{Horodecki2009,
  author = {Horodecki, Ryszard and Horodecki, Pawe{\l} and Horodecki, Micha{\l} and Horodecki, Karol},
  title = {Quantum Entanglement},
  journal = {Reviews of Modern Physics},
  volume = {81},
  number = {2},
  pages = {865--942},
  year = {2009},
  doi = {10.1103/RevModPhys.81.865}
}

@article{Fang2019,
  author = {Fang, Kun and Wang, Xin and Tomamichel, Marco and Duan, Runyao},
  title = {Non-Asymptotic Entanglement Distillation},
  journal = {IEEE Transactions on Information Theory},
  volume = {65},
  number = {10},
  pages = {6454--6465},
  year = {2019},
  doi = {10.1109/TIT.2019.2914688}
}

@misc{Baeuml2019,
  author = {B{\"a}uml, Stefan and Das, Siddhartha and Wang, Xin and Wilde, Mark M.},
  title = {Resource Theory of Entanglement for Bipartite Quantum Channels},
  year = {2019},
  eprint = {1907.04181},
  archivePrefix = {arXiv},
  primaryClass = {quant-ph},
  doi = {10.48550/arXiv.1907.04181}
}

@article{Bennett1993Teleportation,
  author = {Bennett, Charles H. and Brassard, Gilles and Cr{\'e}peau, Claude and Jozsa, Richard and Peres, Asher and Wootters, William K.},
  title = {Teleporting an Unknown Quantum State via Dual Classical and {Einstein--Podolsky--Rosen} Channels},
  journal = {Physical Review Letters},
  volume = {70},
  number = {13},
  pages = {1895--1899},
  year = {1993},
  doi = {10.1103/PhysRevLett.70.1895}
}

@article{Ekert1991,
  author = {Ekert, Artur K.},
  title = {Quantum Cryptography Based on {Bell's} Theorem},
  journal = {Physical Review Letters},
  volume = {67},
  number = {6},
  pages = {661--663},
  year = {1991},
  doi = {10.1103/PhysRevLett.67.661}
}

@article{Briegel1998,
  author = {Briegel, Hans-J. and D{\"u}r, Wolfgang and Cirac, J. Ignacio and Zoller, Peter},
  title = {Quantum Repeaters: The Role of Imperfect Local Operations in Quantum Communication},
  journal = {Physical Review Letters},
  volume = {81},
  number = {26},
  pages = {5932--5935},
  year = {1998},
  doi = {10.1103/PhysRevLett.81.5932}
}

@article{Cirac1999Distributed,
  author = {Cirac, J. Ignacio and Ekert, Artur K. and Huelga, Susana F. and Macchiavello, Chiara},
  title = {Distributed Quantum Computation over Noisy Channels},
  journal = {Physical Review A},
  volume = {59},
  number = {6},
  pages = {4249--4254},
  year = {1999},
  doi = {10.1103/PhysRevA.59.4249}
}

@article{ChitambarGour2019,
  author = {Chitambar, Eric and Gour, Gilad},
  title = {Quantum Resource Theories},
  journal = {Reviews of Modern Physics},
  volume = {91},
  number = {2},
  pages = {025001},
  year = {2019},
  doi = {10.1103/RevModPhys.91.025001}
}

@article{Regula2019OneShot,
  author = {Regula, Bartosz and Fang, Kun and Wang, Xin and Gu, Mile},
  title = {One-Shot Entanglement Distillation beyond Local Operations and Classical Communication},
  journal = {New Journal of Physics},
  volume = {21},
  number = {10},
  pages = {103017},
  year = {2019},
  doi = {10.1088/1367-2630/ab4732},
  eprint = {1906.01648},
  archivePrefix = {arXiv},
  primaryClass = {quant-ph}
}

@article{VandenbergheBoyd1996,
  author = {Vandenberghe, Lieven and Boyd, Stephen},
  title = {Semidefinite Programming},
  journal = {SIAM Review},
  volume = {38},
  number = {1},
  pages = {49--95},
  year = {1996},
  doi = {10.1137/1038003}
}

\newpage
\addtocontents{toc}{\protect\setcounter{tocdepth}{0}}
\appendix
\begin{center}
\LARGE\textbf{Appendix}
\end{center}

\section{A Regularization of the Exact PPT Distillation SDP}
\label{app:block-calculus}

\begin{lemma}[Continuous and integer-output regularization]
\label{lem:regularized-rate-exists}
For nonzero bipartite projections $P$ and $Q$,
\begin{align}
    W_0(P\ox Q)\le W_0(P)W_0(Q).
    \label{eq:W0-submultiplicative}
\end{align}
For a fixed nonzero $P$, define
\begin{align}
    W_n&\coloneqq W_0(P^{\ox n}),&
    a_n&\coloneqq-\log W_n,&
    b_n&\coloneqq\log\left\lfloor W_n^{-1}\right\rfloor .
    \label{eq:regularization-sequences}
\end{align}
Then both $(a_n)$ and $(b_n)$ are superadditive, their normalized
limits exist and are finite, and
\begin{align}
    \sup_{n\ge1}\frac{b_n}{n}
    =\lim_{n\to\infty}\frac{b_n}{n}
    =\lim_{n\to\infty}\frac{a_n}{n}
    =\sup_{n\ge1}\frac{a_n}{n}.
    \label{eq:W0-regularized-limit}
\end{align}
\end{lemma}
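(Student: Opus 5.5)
The proof has three steps: submultiplicativity of $W_0$ via tensor products of optimal effects, Fekete's lemma for $(a_n)$, and a comparison between $b_n$ and $a_n$ that controls the floor.

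\textbf{Submultiplicativity.} Let $E_P$ and $E_Q$ be optimal for $W_0(P)$ and $W_0(Q)$; minimizers exist because the feasible set is compact and nonempty. Set $E=E_P\ox E_Q$.

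\begin{itemize}
\item Upper constraint: since $0\le E_P\le\idop$ and $0\le E_Q\le\idop$, we get $E_P\ox E_Q\le \idop\ox E_Q\le\idop$.
\item Lower constraint: write
\begin{align}
E_P\ox E_Q-P\ox Q=(E_P-P)\ox E_Q+P\ox(E_Q-Q).
\end{align}
Each term is a tensor product of positive semidefinite operators, so $E\ge P\ox Q$.
\item Objective: partial transpose on the joint system $B_1B_2$ factorizes, so $E^\Gamma=E_P^\Gamma\ox E_Q^\Gamma$. The operator norm is multiplicative on tensor products, giving $\|E^\Gamma\|_\infty=W_0(P)W_0(Q)$.
\end{itemize}

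This proves Eq.~\eqref{eq:W0-submultiplicative}. Hence $W_{m+n}\le W_mW_n$, so $a_{m+n}\ge a_m+a_n$.

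\textbf{Limit for $(a_n)$.} For finiteness, $W_n\le1$ (take $E=\idop$), so $a_n\ge0$. For an upper bound, use $W_n\ge M(P^{\ox n})=M(P)^n$ from Eq.~\eqref{eq:general-min-rains-bound-block}; alternatively, Theorem~\ref{thm:range-supported-witness} with $T=P/\tr P$ gives the same kind of bound. Either way $a_n\le nR_{\min}$, so $a_n/n$ is bounded. Fekete's lemma for superadditive sequences then gives $\lim a_n/n=\sup a_n/n<\infty$.

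\textbf{Superadditivity of $(b_n)$.} Let $x_n=W_n^{-1}\ge1$, so $x_{m+n}\ge x_mx_n$. Since $\lfloor x_m\rfloor\lfloor x_n\rfloor$ is an integer at most $x_mx_n\le x_{m+n}$, it is at most $\lfloor x_{m+n}\rfloor$. Taking logarithms gives $b_{m+n}\ge b_m+b_n$.

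\textbf{Comparing $b_n$ with $a_n$.} From $x/2\le\lfloor x\rfloor\le x$ we get $a_n-1\le b_n\le a_n$. Dividing by $n$ shows that $b_n/n$ and $a_n/n$ have the same limit. Fekete applied to $(b_n)$, which is superadditive and bounded above by $a_n\le nR_{\min}$, gives $\lim b_n/n=\sup b_n/n$. This chains together Eq.~\eqref{eq:W0-regularized-limit}.

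The only point needing care is the floor in the superadditivity of $(b_n)$, and the integer argument above handles it. Everything else is routine.
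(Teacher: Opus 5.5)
Your proof is correct and follows the paper's argument essentially step for step: tensoring feasible effects for submultiplicativity, the integer argument $\lfloor x_n\rfloor\lfloor x_m\rfloor\le\lfloor x_{n+m}\rfloor$ for superadditivity of $(b_n)$, Fekete's lemma, and the sandwich $a_n-1\le b_n\le a_n$. The only deviation is the finiteness bound: the paper uses the elementary estimate $d^n\|E^\Gamma\|_\infty\ge\tr E\ge(\tr P)^n$, whereas you use either $W_n\ge M(P)^n$, which relies on the cited multiplicativity and strong duality, or the range-witness bound with $T=P/\tr P$, which is self-contained and works equally well.
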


\begin{proof}
Let $E$ and $F$ be feasible for $W_0(P)$ and $W_0(Q)$,
respectively.  Since $P\le E\le\idop$ and $Q\le F\le\idop$,
\begin{align}
    E\ox F-P\ox Q
    &=(E-P)\ox F+P\ox(F-Q)\ge0,\\
    \idop\ox\idop-E\ox F
    &=(\idop-E)\ox\idop+E\ox(\idop-F)\ge0.
\end{align}
Thus $E\ox F$ is feasible for $W_0(P\ox Q)$.  Taking partial transpose on the $B$ system of each factor gives
\begin{align}
    \|(E\ox F)^\Gamma\|_\infty
    =\|E^\Gamma\ox F^\Gamma\|_\infty
    =\|E^\Gamma\|_\infty\|F^\Gamma\|_\infty.
\end{align}
Taking the infimum over $E$ and $F$ proves Eq.~\eqref{eq:W0-submultiplicative}.

We first record that all quantities in
Eq.~\eqref{eq:regularization-sequences} are finite.  Set
$d=\dim(\cH_{AB})$ and $r=\tr P\ge1$.  Every $E$ feasible for
$W_n$ satisfies
\begin{align}
    d^n\|E^\Gamma\|_\infty
    \ge|\tr E^\Gamma|
    =\tr E
    \ge\tr(P^{\ox n})
    =r^n.
\end{align}
The feasible effect $E=\idop$ gives the reverse trivial bound
$W_n\le1$.  Hence
\begin{align}
    0<\left(\frac rd\right)^n\le W_n\le1,
    \qquad
    0\le a_n\le n\log\frac dr.
    \label{eq:Wn-finite-bounds}
\end{align}
Applying Eq.~\eqref{eq:W0-submultiplicative} to $P^{\ox n}$ and $P^{\ox m}$ yields
$W_{n+m}\le W_nW_m$, and therefore $a_{n+m}\ge a_n+a_m$.  Next set $x_n=W_n^{-1}\ge1$.  The same
inequality gives $x_{n+m}\ge x_nx_m$, so
\begin{align}
    \lfloor x_{n+m}\rfloor
    \ge\lfloor x_nx_m\rfloor
    \ge\lfloor x_n\rfloor\lfloor x_m\rfloor.
    \label{eq:floor-supermultiplicative}
\end{align}
The last inequality holds because
$x_nx_m\ge\lfloor x_n\rfloor\lfloor x_m\rfloor$ and its
right-hand side is an integer.  Taking logarithms proves
$b_{n+m}\ge b_n+b_m$.  Fekete's lemma therefore applies to
both sequences.

Finally, we have that
\begin{align}
    \frac{x_n}{2}\le\lfloor x_n\rfloor\le x_n
\end{align}
holds for every $x_n\ge1$. This is immediate for $1\le x_n<2$, while for $x_n\ge2$ one has $\lfloor x_n\rfloor\ge x_n-1\ge x_n/2$.  Because all logarithms
are base two, this gives
\begin{align}
    a_n-1\le b_n\le a_n.
    \label{eq:integer-continuous-block-comparison}
\end{align}
Divide by $n$ and let $n\to\infty$.  The normalized limits coincide, and Fekete's lemma identifies each with the corresponding supremum, proving Eq.~\eqref{eq:W0-regularized-limit}.
\end{proof}

\section{Antisymmetric supports}\label{app:antisym}

Let $F$ be the swap on $\CC^d\ox\CC^d$ and define
\begin{align}
P_-\coloneqq\frac{\idop-F}{2},\qquad P_+\coloneqq\frac{\idop+F}{2}.
\end{align}
For $d\ge2$, note that $F^\Gamma=d\Phi_d$.  The effect
\begin{align}
E\coloneqq P_-+\frac{d-2}{d+2}P_+
\label{eq:antisymmetric-effect}
\end{align}
has coefficient $(d-2)/(d+2)$ on the symmetric subspace and coefficient one on
the antisymmetric subspace.  Since $0\le(d-2)/(d+2)\le1$ for $d\ge2$, this
proves $P_-\le E\le\idop$.  Rewrite $E$ and $E^\Gamma$ as
\begin{align}
E=\frac{d}{d+2}\idop-\frac{2}{d+2}F,
\qquad
E^\Gamma=\frac{d}{d+2}\idop-\frac{2d}{d+2}\Phi_d.
\end{align}
We see that
\begin{align}
\|E^\Gamma\|_\infty=\frac{d}{d+2}.
\label{eq:antisymmetric-upper}
\end{align}
Because $E$ is feasible, this proves
$W_0(P_-)\le d/(d+2)$.  For the converse bound, note that
\begin{align}
P_-^\Gamma=\frac12(\idop-d\Phi_d).
\end{align}
Its eigenvalue on the range of $\Phi_d$ is $-(d-1)/2$, while its eigenvalue on
the $(d^2-1)$-dimensional orthogonal complement is $1/2$.  Therefore
\begin{align}
\|P_-^\Gamma\|_1
=\frac{d-1}{2}+\frac{d^2-1}{2}
=\frac{(d-1)(d+2)}{2}.
\label{eq:antisymmetric-P-trace-norm}
\end{align}
Consequently,
\begin{align}
V\coloneqq\frac{2}{(d-1)(d+2)}P_-
\label{eq:antisymmetric-primal}
\end{align}
is positive and satisfies $\|V^\Gamma\|_1=1$, so it is primal feasible.  Since
$\tr P_-=d(d-1)/2$ and $P_-^2=P_-$, its objective value is
\begin{align}
\tr(P_-V)
=\frac{2\tr P_-}{(d-1)(d+2)}
=\frac{d}{d+2}.
\end{align}
Thus $M(P_-)\ge d/(d+2)$.  Combining this with the general inequality
$W_0(P_-)\ge M(P_-)$ and the effect upper bound gives
\begin{align}
W_0(P_-)=M(P_-)=\frac{d}{d+2}.
\end{align}
For $n$ copies, tensoring the feasible effect in
\eqref{eq:antisymmetric-effect} gives the upper bound
$W_0(P_-^{\ox n})\le[d/(d+2)]^n$.  Conversely,
the general ordering $W_0(P_-^{\ox n})\ge M(P_-^{\ox n})$ and
multiplicativity of $M$ give the same
quantity as a lower bound.  Therefore
\begin{align}
W_0(P_-^{\ox n})=M(P_-)^n
=\left(\frac{d}{d+2}\right)^n.
\label{eq:antisymmetric-all-n}
\end{align}
Here, equality is enforced by matching primal and effect certificates that already tensorize without loss.

\end{document}